\documentclass{egpubl}
\usepackage{egsgp2025}

\SpecialIssuePaper         
\CGFStandardLicense

\usepackage[T1]{fontenc}
\usepackage{dfadobe}  

\usepackage{cite}  %
\BibtexOrBiblatex
\electronicVersion
\PrintedOrElectronic
\ifpdf \usepackage[pdftex]{graphicx} \pdfcompresslevel=9
\else \usepackage[dvips]{graphicx} \fi

\usepackage{egweblnk}

\usepackage{layouts}
\usepackage{wrapfig}
\usepackage{amsmath}
\usepackage{amssymb}
\usepackage{booktabs}
\usepackage{algorithm}
\usepackage[noend]{algpseudocode}

\def\etal{et~al.}
\def\eg{e.g.}               %

\newtheorem{theorem}{Theorem}

\title[Uniform Sampling of Surfaces by Casting Rays]%
{
Uniform Sampling of Surfaces by Casting Rays
}

\author[S. Ling \& A. Madan \& N. Sharp \& A. Jacobson]
{\parbox{\textwidth}{\centering Selena Ling$^{1}$, Abhishek Madan$^{1}$, Nicholas Sharp$^{3}$
       and Alec Jacobson$^{1,2}$
       }
       \\
{\parbox{\textwidth}{\centering $^1$University of Toronto, Canada\\
        $^2$Adobe Research, Canada\quad
        $^3$NVIDIA, USA
       }
}
}

\begin{document}

\teaser{
\vspace*{-2.8em} 
 \includegraphics[width=\linewidth]{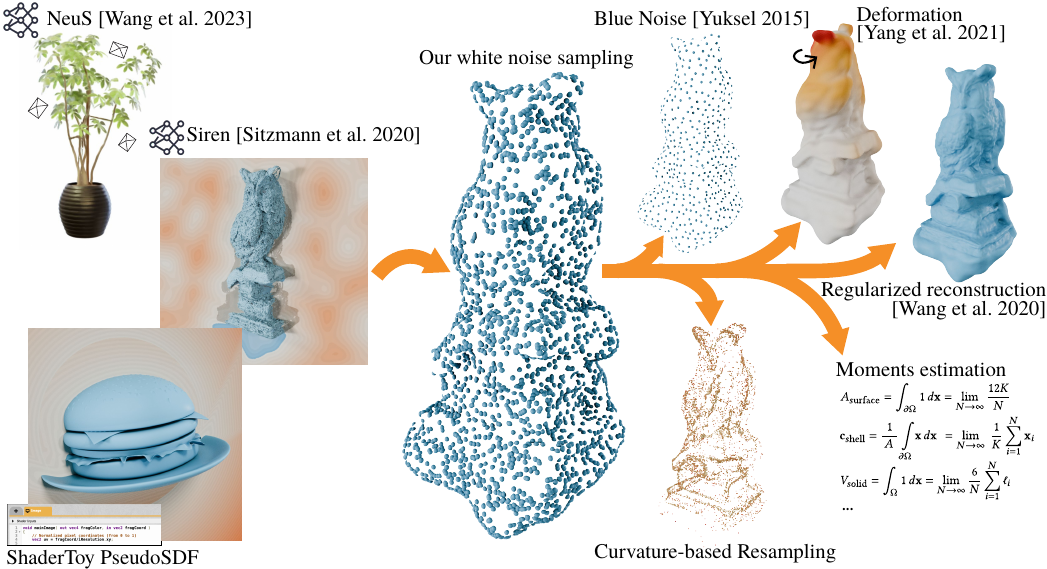}
 \centering
  \caption{
  Implicit surfaces represent the geometry of a shape as the zero level set of a function of 3D space. There are many varieties of implicit surfaces used in practice (left).
  We study a method for a fundamental surface operation: uniformly random point sampling.
  Any point on the continuous surface is equally likely to be chosen through our proposed process of sampling random rays in space and finding intersections with the implicit (typically with sphere tracing).
  This white noise sampling can then act as the \emph{raw ingredient} for downstream recipes, including blue noise sampling, curvature-based importance resampling, deforming neural implicits, sampling-based regularization terms for reconstruction, and direct estimation of basic shape quantities (surface area, enclosed volume, center of mass, etc.). ShaderToy PseudoSDF credit to \copyright Xor \href{https://creativecommons.org/licenses/by-nc-sa/3.0/}{(CC BY-NC-SA 3.0)}}
  }
\label{fig:teaser}

\maketitle
\begin{abstract}
    Randomly sampling points on surfaces is an essential operation in geometry processing. This sampling is computationally straightforward on explicit meshes, but it is much more difficult on other shape representations, such as widely-used implicit surfaces.  This work studies a simple and general scheme for sampling points on a surface, which is derived from a connection to the intersections of random rays with the surface. Concretely, given a subroutine to cast a ray against a surface and find all intersections, we can use that subroutine to uniformly sample white noise points on the surface. This approach is particularly effective in the context of implicit signed distance functions, where sphere marching allows us to efficiently cast rays and sample points, without needing to extract an intermediate mesh. We analyze the basic method to show that it guarantees uniformity, and find experimentally that it is significantly more efficient than alternative strategies on a variety of representations. Furthermore, we show extensions to blue noise sampling and stratified sampling, and applications to deform neural implicit surfaces as well as moment estimation.

\end{abstract}  

\section{Introduction}

Sampling uniformly distributed points on surfaces is essential for characterizing the underlying geometry in many downstream applications such as visualization and simulation.
Concrete algorithmic use-cases include integrating surface areas or other geometric properties~\cite{li2003using}, evaluating regularizers for geometric optimization~\cite{isopoints}, computing metrics such as Chamfer distance~\cite{diffcd}, and sampling BSSRDF exit points in rendering~\cite{king2013bssrdf}.
It is easy to draw samples on a surface explicitly defined as a mesh, however it is not so straightforward when no mesh is available, such as implicit surfaces. 
This is problematic, as implicit surfaces are increasingly widely used, for instance as a neural representation for shapes in machine learning.
We consider the problem of sampling uniformly-distributed points on surfaces in this more general setting.

How might one sample points on an implicit surface? One possibility is rejection sampling, drawing random points in space and keeping only those near the surface, but this wastes many samples and requires an error-inducing projection onto the surface.
Another is to use isosurface extraction such as marching cubes to recover a mesh, and sample from that, but this requires expensive sampling to a grid and can alias fine features.
Principled sampling processes like Markov chain Monte Carlo guarantee a proper distribution in an asymptotic limit, but convergence may be slow in practice.

Instead, we leverage a classic mathematical relationship to the intersection of random lines with the surface (e.g., ``Cauchy-Crofton Formula'' and ``Buffon's Needle Problem'').
Taking random lines drawn from an appropriate distribution and gathering all intersections yields a uniform sampling of the surface---see \cite{palaispoints} for one reference.
Precisely, this strategy produces a white-noise uniform distribution of samples, although other samplings can also be obtained with our method (Figure~\ref{fig:bluenoise}).
This relationship is well-known in mathematics. It has sporadically and briefly been studied in visual computing~\cite{Detwiler_2008}, but has not previously been put to work in the context of modern implicit surfaces and neural representations, where we show that it offers significant benefits.

Importantly, this sampling method applies to any surface representation for which we have the ability to intersect rays with the surface.
Because ray casting is already a necessary operation for rendering and visualization, it is widely and efficiently implemented for a wide variety of implicit surfaces and other nonstandard shape representations.
For instance, we can efficiently intersect rays with implicit signed distance functions surfaces via sphere tracing~\cite{spheretracing}, with harmonic functions via Harnack tracing~\cite{gillespie2024ray}, or more general implicit functions via interval tracing~\cite{stol1997self}.
Even recent Gaussian particle representations allow for efficient ray tracing~\cite{moenne20243d}.

\begin{figure}[htb]
  \includegraphics[width=\linewidth]{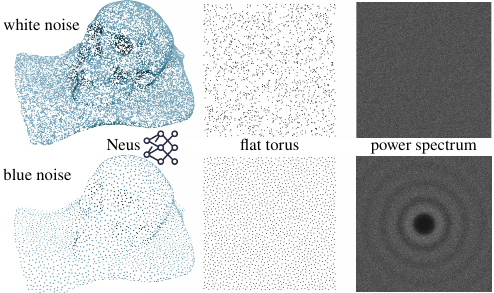}
  \caption{\label{fig:bluenoise}
  White noise sampling is often useful as a raw ingredient to downstream applications, such as blue noise generators (e.g., \cite{bluenoise,bridsonbluenoise}. Our white noise samples on a neural implicit \cite{neus} (top left) are subsampled to a blue noise sample set \cite{bluenoise}. For reference, we show the same process on a periodic square and its corresponding power spectrum images \cite{SchlomerD11}.
  }
\end{figure}

In this paper we study ray intersection-based sampling in the context of implicit surfaces and neural representations. 
We provide self-contained proofs supporting the claim of uniformity and numerically compare with other existing approaches, showing significantly improved efficiency and uniformity. 
We also demonstrate extensions to downstream applications of the approach, including stratified sampling (Section \ref{sec:sparsevoxels}), blue noise sampling (Section \ref{sec:bluenoise}), implicit deformation (Section \ref{sec:deformation}), and more.

\pagebreak

\section{Related Work}

\begin{figure*}[htb]
  \includegraphics[width=\linewidth]{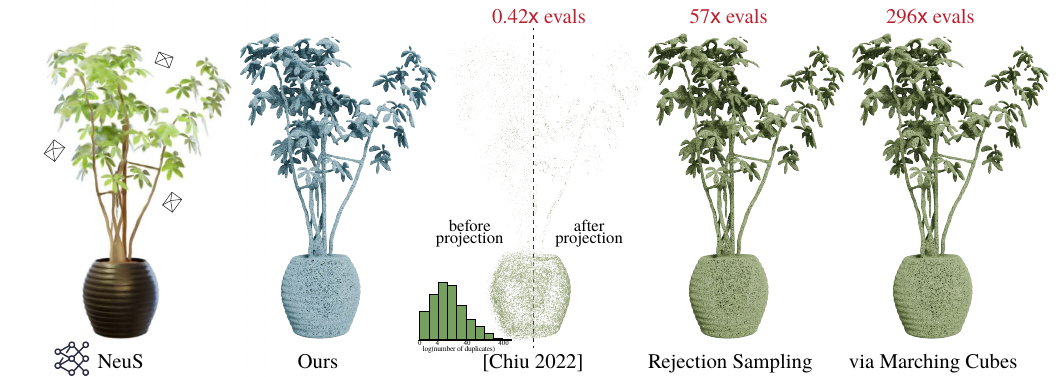}
  \caption{\label{fig:ficus-comparison}
  We visualize 500,000 samples using our method and all baseline methods on the level set of a neural implicit function \cite{neus} learned from images of the ficus scene in the Blender dataset. Our method achieves similar or better sample quality as rejection sampling and sampling via marching cubes on a grid of resolution $4096^3$ while being a factor of 57$\times$ and 296$\times$ cheaper in terms of function evaluations, respectively (for reference, ours took $2.3 \times 10^8$ function evaluations). Because the specialized Hamiltonian Monte Carlo method from \cite{ericauniform} can only sample near the surface (left slice), it requires an additional non-trivial projection to result in on-surface samples (right slice).
  This method also results in duplicated samples (which are necessary for correct statistics), so
  we also plot a histogram of duplicated sample counts.
 Both rejection sampling and sampling via marching cubes also require projection, but is too small to visualize (though it does mildly affect exact uniformity).
  }
\end{figure*}

Sampling surfaces is a core operation used across visual computing from rendering to numerical integration and beyond.
Here we focus specifically on past work for sampling when explicit mesh representations are not available, such as implicit surfaces.

\subsection{Sampling and Implicit Surfaces}
Implicit surfaces are a flexible and general representation, defining surfaces as the level set of functions such as \textit{signed distance functions} or \textit{occupancy functions}. Recent research across visual computing has leveraged neural implicit functions as a shape representation for tasks in learning and reconstruction~\cite{deepsdf, neus, volsdf}.
Across these works, sampling the level set has proven to be helpful for many downstream tasks such as rendering from the underlying shape \cite{volsdf,neus}, improving its optimization by defining on-surface regularization terms \cite{controllinglevelset, isopoints, volsdf, diffcd, neuralindicator} and shape manipulation \cite{gpnf}. 

\subsection{Sampling Algorithms}

Although it is useful to sample points from surfaces which lack an explicit representation, it is not obvious how to do so.
Many strategies have been considered in past work.

\paragraph*{Rejection Sampling}

A basic standard approach is \emph{rejection sampling}: drawing points at random in the domain and keeping only those which lie on the surface, see \eg{} \cite{gpnf,diffcd}.
Rejection sampling is simple and straightforward, but may require a huge number of rejected samples.
Furthermore, because surfaces are co-dimensional sheets with no volume, points will never land exactly on the surface---in practice one must either retain samples in a narrow band around the surface (not truly sampling the surface), or project onto the surface as a post-process (concentrating points in positively-curved regions).

\paragraph*{Grid Based}
Another possibility is to evaluate an implicit function on a regular grid for sampling.
Most commonly, isosurfacing algorithms such as marching cubes~\cite{lorensen1998marching} are used to construct an explicit mesh as an intermediary for sampling~\cite{darmon2022improving,diffcd,neuralindicator}.
Yan \etal{} \cite{yan2014unbiased} apply a related grid-decomposed sampling by observing that the surface can be represented as a height function per-cell, although the function is still linearized within each cell.
These strategies nicely leverage the well-established tools of extraction and mesh sampling, but may suffer from excessive computation and aliasing of fine features from working on a grid.

\paragraph*{Particle Evolution and MCMC}
Other methods iteratively update a sample set to converge to the desired distribution to gradually discover a suitable set of samples.
Wang et al.\cite{isopoints} approximate blue noise samples to assist in training, via a three-stage procedure of projecting, resampling, and upsampling, although they target improved training efficacy more so than any precise sampling distribution.
Such strategies can be formalized via Markov chain Monte Carlo (MCMC), which provably converge to a uniform sample set in the asymptotic limit, as studied in the thesis of Chiu \cite{ericauniform} for implicit surface sampling.
Likewise, Langevin dynamics-based formulations evolve to the desired distribution according to a stochastic differential equation~\cite{cai2020learning,gpnf}, and~\cite{samplingmollified} sample from a mollified interaction energy.
However, these methods may require large numbers of iterations to ``warm up'' to a uniform distribution on complex geometry, and most still require an error-inducing projection step to generate points exactly on the surface.

\paragraph*{Line Intersections}
The sampling method studied in this work follows from a classic relationship of random lines intersecting a surface, see Section~\ref{sec:theory}. 
Although this approach has not yet been utilized in recently important applications with implicit surfaces where it has significant advantages, it has occasionally appeared elsewhere in visual computing, which we outline here.
Detwiler \etal{} use random lines to sample points specifically in the context of CAD geometry~\cite{Detwiler_2008}, although their approach requires specifying an upper bound on the number of intersections \textit{a priori}, which may be prohibitive in practice.
The same approach is adapted to point clouds~\cite{liu2006quasi} via the ray tracing approach of~\cite{schaufler2000ray}, and digital binary-voxel geometry~\cite{liu2010surface}, both for the particular purpose of estimating surface area.
Some other works use low-discrepancy sequences to obtain surface samples on CAD objects~\cite{li2003using} for estimating surface area, and for sampling on meshes~\cite{rovira2005point}.
Beyond surface sampling, ray marching has also been used for sampling on the medial axis~\cite{yeh2014umaprm}.

\begin{figure*}[htb]
  \includegraphics[width=\linewidth]{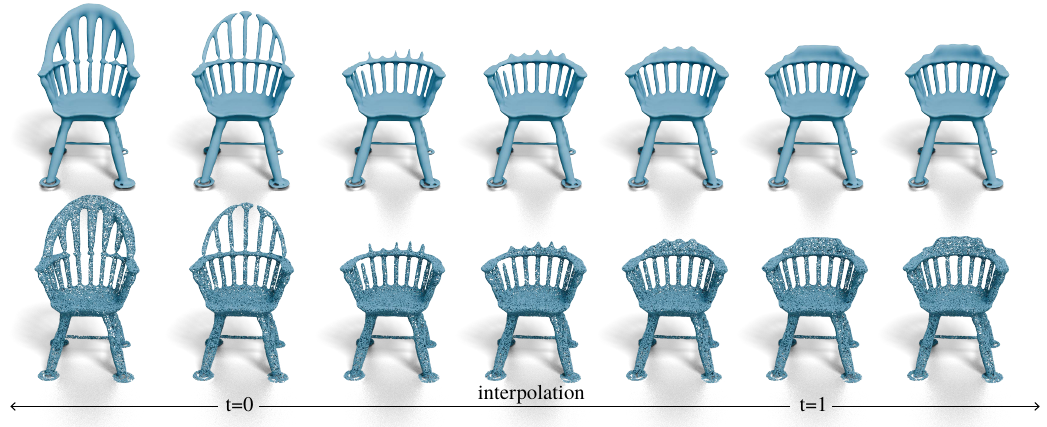}
  \caption{\label{fig:interpolation}
    A neural field is learned to represent a smooth interpolation between two different chairs at $t=0$ and $t=1$ as in ~\cite{liu2022learning} (\emph{top}). 
    Our method uniformly samples the implicit surface of the interpolated shape at any point in the sequence (\emph{bottom}). 
  }
\end{figure*}

\subsection{Other Notions of Sampling}
We primarily consider uniform sampling in the sense of white noise, but other kinds of sampling are also important in visual computing.
Low discrepancy sequences and stratified sampling~\cite{li2003using,rovira2005point,quinn2009low} seek to reduce variance at bounded sample counts by spatially distributing the sample sequence across the domain.
Blue noise sampling seeks a different distribution, with nicely-spaced samples on the surface which are useful for texture synthesis and perceptual optimization ~\cite{bridsonbluenoise, bluenoise, oztireli2010spectral}; in Section~\ref{sec:bluenoise} we show how our sampler can do the same on surfaces.

\section{Theory}
\label{sec:theory}

Our method is derived from the relationship between random lines and their intersections with a surface.
These properties are well-known in integral geometry---the Cauchy-Crofton Formula relates the length of a curve to the number of intersections with a random line, while Buffon's needle problem considers the likelihood of a line segment intersecting parallel strips.
One introduction to sampling via line intersections can be found in~\cite{palaispoints}.
In this section, we provide a self-contained introduction to equidistributed sequences and some intuition for why the algorithm works.

\subsection{Equidistributed Sequences}
A crucial building block of randomized algorithms is an underlying sequence of random numbers that are uniformly distributed, as they provide a simple and flexible primitive for stochastic computation.
Instead of attempting to obtain truly random numbers, pseudo-random number generators (PRNGs) are used in practice, which produce long deterministic sequences with statistical properties that emulate truly random uniformly distributed sequences.
One desirable property is that the proportion of samples in a given region of the sampling domain should be roughly equal to the relative area of that region; sequences that exhibit this property are called \textit{1-equidistributed}~\cite{franklindeterministic}.
An equivalent characterization is that, for any function $f$ over the sampling domain $\Omega$ (an $m$-dimensional manifold embedded in $\mathbb{R}^n$ for any $1 \le m \le n$), the sequence $\{ x_i \}$ can be used to construct a Monte Carlo estimator of $\int_\Omega f(x)dx$ with uniform contribution weights that converges as the sample count $N$ approaches infinity (i.e., $\lim_{N \to \infty} \frac{1}{N} \sum_{i=1}^N |\Omega| f(x_i) = \int_\Omega f(x) dx$) --- such an estimator is called \textit{consistent}.
The notion of equidistribution can be extended to arbitrarily large $k$ by instead considering the convergence of the estimator $\lim_{N \to \infty} \frac{1}{N} \sum_{i=1}^{N-k+1} |\Omega| f_k(x_i, \ldots, x_{i+k-1}) = \int_{\Omega^k} f_k(x_1, \ldots, x_k) dx_1 \ldots dx_k$ for a function of $k$ variables $f_k$.
If a sequence is $k$-equidistributed for all natural numbers $k$, it is \textit{completely equidistributed}.
Since $k$-equidistribution for increasingly large $k$ imposes more and more uniformity requirements on the sequence, it suffices to use $k$-equidistributed sequences for large $k$ in numerical computation, in place of samples from a truly uniform distribution.

\begin{figure}[htb]
  \includegraphics[width=\linewidth]{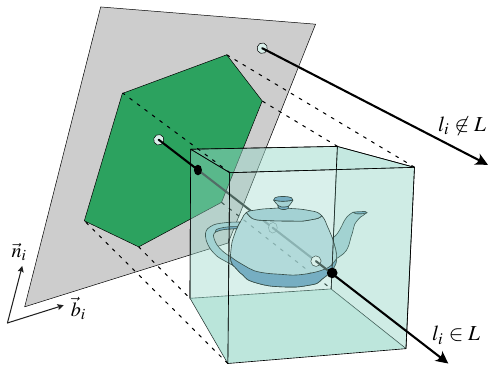}
  \caption{\label{fig:line-sampling}
           An illustration of our uniform ray sampling algorithm as in Algorithm \ref{alg:linesampling}. Given the blue $[-1,1]^3$ cube around an implicitly-defined teapot shape, we first sample a direction $\vec{d}_i$ uniformly. Among all rays parallel to that direction, we only want the rays that intersect the cube to add to our ray sample set $L$, which we achieve by rejection sampling on the gray $2\sqrt{3}$-sided square in the plane spanned by normal $\vec{n}_i$ and bi-normal $\vec{b}_i$. The points whose emanated rays intersect the bounding box fall in the green region on the plane, which is the orthographically projected area of the bounding box onto the plane.
           }
\end{figure}

\subsection{Sampling via Oriented Line Casting}

The surface sampling algorithm is as follows: uniformly sample $M$ oriented lines, and for each line, append \textit{all} of its surface intersections with $\Omega$ into a list of points (this algorithm was previously proposed by Palais \etal~\cite{palaispoints}).
At first glance, it is surprising that such a method truly produces uniform surface samples, but building upon the Cauchy-Crofton formula from integral geometry~\cite{santalointegral}, by taking all surface intersections for each line rather than, e.g., a single intersection per line, the resulting sequence is $k$-equidistributed if the line distribution is also $k$-equidistributed.

Below we provide some intuition for the two key ideas behind the proof for the 1-equidistributed case; the general $k$-equidistributed case follows similarly (and see \cite{palaispoints} for a rigorous proof).
The first key idea is that each point on the surface is intersected by exactly one line along each direction $\vec{d} \in S^2$, so as long as we uniformly sample lines in every direction within a region enclosing the surface, we can ``average out'' the contribution of a point to an arbitrary integral over the surface over all line directions.
The second key idea is that different lines can intersect different numbers of points, which means the lines need non-uniform densities (e.g., lines with more intersections should have a larger density).
Although it is intractable to obtain the true line densities \textit{a priori}, returning every intersection along the line essentially amortizes a line's true non-uniform density and gives each point the desired uniform density, and so the consistent Monte Carlo estimator can simply use uniform weights.
An alternative strategy was used by Detwiler \etal{} \cite{Detwiler_2008} where they rejected lines based on the number of intersections relative to the maximum possible number of intersections with $\Omega$, but in our approach extracts at least one surface sample from every ray and can extract several samples for rays with many intersections, making it more efficient while still maintaining uniformity.

The samples produced by this method are not independent (though they are identically distributed), as the collinear points produced by intersections with the same line are correlated.
However, the definition of $k$-equidistribution ensures that the points are sufficiently well-distributed such that the correlation between samples does not influence downstream applications.

For the rest of the paper, we will refer to \textit{rays} instead of oriented lines; although rays have an associated origin and lines do not, as we will describe in Section~\ref{sec:ray-sampling}, 
\begin{figure}[htb]
  \includegraphics[width=\linewidth]{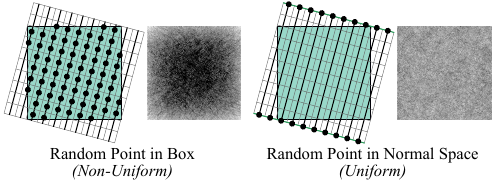}
  \caption{\label{fig:line-offsets}
            Ray origin sampling is crucial for obtaining uniform ray samples. If origins are simply chosen as random points within the box (left), the rays will pass through the center of the box more frequently than the edges (middle left). Meanwhile, if the origins are chosen in the space normal to the line direction (middle right), then the ray density is uniform over the box (right).
           }
\end{figure}
rays are sampled by selecting an origin outside of the bounding cube, such that exactly one ray corresponds to each oriented line (see Figures~\ref{fig:line-sampling},~\ref{fig:line-offsets}).

\subsection{Ray Resampling}
\label{sec:resampling}

The one-to-many sampling procedure described above can also be viewed through the lens of resampling~\cite{rubin1987sir,talbot2005importance}: each ray can be given a weight proportional to the number of their intersections with the surface, and then the rays can be resampled with replacement with probabilities to their weights, after which an intersection along the ray is randomly selected with equal probability.
Taking every single intersection for each ray is then a particular instantiation of the aforementioned resampling process, where each ray $\ell_i$ with $k_i$ intersections is resampled $k_i$ times, and each time a different intersection along the ray is selected.
We formally prove the equivalence that resampling produces uniformly distributed samples in Appendix~\ref{app:resampling}, and empirically show that resampling produces samples of comparable quality in Figure~\ref{fig:resampling}.

\section{Method}

In this paper, we propose a method that uniformly samples the level set of a given implicit function via random ray casting and finding the intersections between the rays and the level set within the bounding domain. The inputs to our algorithm is a bounding box defining the sampling domain and an implicit function with a Lipschitz constant bound $\lambda$. The samples extracted from our method lies strictly on the implicitly-defined surface without needing additional projection and is uniformly distributed on the surface by the definition of uniformity we provided in above section. 

\subsection{Uniform Ray Sampling}
\label{sec:ray-sampling}

Sampling rays uniformly without any bias is an essential first step in our method. However, it is impossible to sample directly in an unbounded manner in the same way that sampling all points in $\mathbb{R}^3$ is impractical. Our method is only interested in rays that will possibly intersect the level set defining the surface, and therefore it is reasonable to restrict the problem of uniform ray sampling to uniformly sampling 
\begin{algorithm}[h!]
\caption{UniformRays: Sampling $n$ uniformly distributed rays within bounding box $[-1,1]^3$.}
\label{alg:linesampling}
\begin{algorithmic}
\Require $M$: number of rays
\Require $bbox$: bounding box of size $[-1,1]^3$ 
\State $L \gets \{\}$
 \While{$size(L) < M$}
    \State $\vec{d}_i \gets$ random unit direction
    \State $\vec{n}_i,\vec{b}_i \gets$ normal, bi-normal direction of $\vec{d}_i$ via SVD
    \State $u_0, u_1 \gets$ random offset values uniformly sampled in the interval between 0 and the bounding box diagonal length [$-\sqrt{3}$, $\sqrt{3}$]
    \State $o_i \gets u_0\vec{n}_i+u_1\vec{b}_i$
    \State $\ell_i=o_i+t\vec{d}_i$
    \If {$IntersectBoundingBox(\ell_i, bbox)$}
    \State Append $\ell_i$ to L
    \EndIf
\EndWhile
\State \textbf{return }{$L = \{\ell_i\}$}
\end{algorithmic}
\end{algorithm}
all rays that intersect the bounding box of the surface. Note that although many other alternative bounding volumes exist, e.g., a bounding sphere, we default to a non-tight bounding box following the convention in deep learning of normalizing shapes into the $[-1,1]^3$ cube, though tight bounding volumes would reduce the number of sampled rays that miss the surface and would thus be more efficient. While there are a variety of constructions, we propose to first sample a direction uniformly, and then among all rays parallel to that direction, uniformly sample ray origins so the resulting ray intersects the bounding cube. We do this using rejection sampling, by uniformly sampling a point on a $2\sqrt{3}$-sided square in the plane defined by the normal and bi-normal directions $\vec{n}_i$ and $\vec{b}_i$, respectively, until the parallel ray passing through this point intersects the bounding box. This intersection test between a ray candidate and the bounding box can be done with a fast ray-slab test, which has a low rejection rate if the bounding box is tight. See Algorithm \ref{alg:linesampling} for a more detailed version of this algorithm and Figure \ref{fig:line-sampling} for a explanatory diagram. This algorithm generalizes to any bounding primitive (e.g., spheres, k-DOPs), but bounding spheres can skip the rejection sampling step and instead directly sample ray origins on the projected disc on the $(\vec{n}_i, \vec{b}_i)$ plane.

One might wonder if a simple ray sampling algorithm would suffice, for example, picking a random point in $[-1,1]^3$ and pairing it with a random direction. However, this results in a non-uniform distribution as shown in Figure~\ref{fig:line-offsets}.

\subsection{Ray-Intersection Evaluation}

\begin{wrapfigure}[7]{L}{0.2\linewidth}
\includegraphics[width=0.12\textwidth,trim=0cm 0cm 0cm 0.4cm]{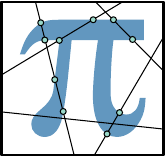}
\label{fig:intersect}
\end{wrapfigure}
Once we have $M$ randomly sampled rays, we want to find all intersections between the sampled rays and the surface (see inset, green points). 
For implicit functions, we consider a known Lipschitz bound of $\lambda$, i.e.,
$$\left|f(p_1)-f(p_2)\right|\leq\lambda \left\|p_1-p_2\right\|_2,\,  \lambda>0.$$
This means that at every query location $x$, we know we can safely take a step of size $|f(x)| / \lambda$ in any direction $\vec{d}$ without stepping over the zero level set~\cite{spheretracing}.
\begin{algorithm}[h!]
\caption{UniformPoints: Sample uniformly distributed points on the zero level set of an implicit function $f$ with Lipschitz bound $\lambda$ via random ray casting.}
\label{alg:randompoints}
\begin{algorithmic}
\Require $f$: implicit function with Lipschitz bound $\lambda$
\Require $\epsilon = 1e^{-4}$: tolerance for ray-intersection finding
\State $L \gets $ Sample $M$ rays via UniformRays() in Algorithm \ref{alg:linesampling}
\For {$i=1,2 \ldots M$}
\State $P = \{\}$
\State $t = 0$
\State $p \gets o_i+t\vec{d}_i/\lambda$
\State $s \gets |f(p)|$
\While{$t<\text{length}(\ell_i)$}
\If {$s < \epsilon$}
\State Append $p$ to $P$
\While{$s < \epsilon$}
\State $t \gets t + max(s, \epsilon)$
\State $p \gets o_i+t\vec{d}_i/\lambda$
\State $s \gets |f(p)|$
\EndWhile
\EndIf
\State $t \gets t + s$
\State $p \gets o_i+t\vec{d}_i/\lambda$
\State $s \gets |f(p)|$
\EndWhile
\EndFor
\State \textbf{return }{$P$}
\end{algorithmic}
\end{algorithm}
We slightly modify the classic sphere tracing algorithm \cite{spheretracing}, originally designed to find only the first intersection of a given ray with the level set, to instead find all intersections of a ray with the level set.
In our modification, we ray march from the origins of each ray with step size $|f(x)| / \lambda$ until we find an intersection at the zero level set when $|f(p)|<\epsilon$. After each intersection, we keep ray marching with step size $\text{max}(|f(x)|, \epsilon) / \lambda$ until the implicit function value reaches above the threshold value $\epsilon$, and proceed as usual to find the next intersection, terminating when we reach the end of the ray (i.e., the ray is outside of the box). Please see Algorithm \ref{alg:randompoints} for the detailed pseudocode. 

\begin{figure}[htb]
  \includegraphics[width=\linewidth]{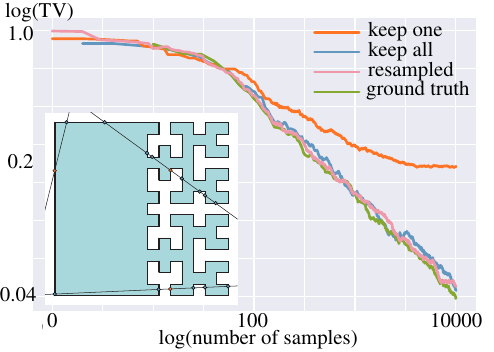}
  \caption{\label{fig:resampling}
  We plot the total variation (TV) score measuring the uniformity of sample sets acquired with uniform sampling on the ground truth polyline (``ground truth''), our method (``keep all''), the resampling procedure described in Section~\ref{sec:resampling} (``resampled'') and taking a single intersection (``keep one''). Our method and the resampling approach produce uniform samples on the surface comparable with comparable TV scores to directly sampling on the polyline, while the keeping one sample along each casted ray produces worse TV scores. 
  }
\end{figure}
As this part of our algorithm is largely based on sphere tracing, our method can benefit from any improvements proposed for sphere tracing such as \cite{segmenttracing}.

\paragraph*{Taking One Intersection}
\label{sec:alternatives}

A tempting alternative is to simply take one random intersection instead of all intersections on each casted ray. However, as shown in Figure~\ref{fig:resampling}, naively only keeping one sample on casted rays results in incorrect non-uniform samples.

\subsection{Acceleration and Stratification via Sparse Voxels}
\label{sec:sparsevoxels}

\begin{figure}
  \includegraphics[width=\linewidth]{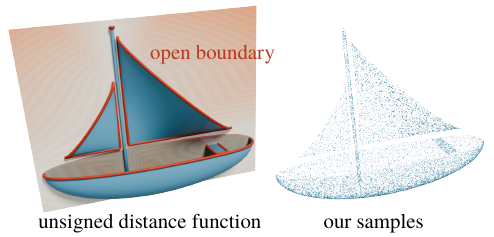}
  \caption{\label{fig:opensurface}
  Our method only requires the surface to define a codimension-one shape as the level set of an implicit function. Non-manifold junctures and open boundaries are no issue for our white noise sampling, such as this sailboat modeled as a cascade of analytic \emph{unsigned} distance functions.
  }
\end{figure}

Many neural implicit surface functions come with existing sparse voxel structures built to accelerate its optimization \cite{muller2022instant}. Our method naturally benefits from sparse voxel structures as we can divide every implicitly-defined surface into sub-surfaces with open boundaries inside each voxel, and apply our method independently within each voxel. 
Since our method only requires an implicit function of non-zero Lipschitz bound, it can work with shapes with non-manifold junctures and open boundaries as shown in Figure \ref{fig:opensurface}.

Applying our method with sparse voxel structures not only accelerates the sampling process but also provides a way to perform stratified sampling to reduce variance in Monte Carlo estimates, as shown in Figure \ref{fig:sparsevoxels}. The sparse voxel structure is also useful for input where sphere tracing-like algorithms cannot be efficiently used everywhere, e.g., many formulations of neural implicits which encourage SDF-ness but do not guarantee it and may have large global Lipschitz bounds. 

\begin{figure}
  \includegraphics[width=\linewidth]{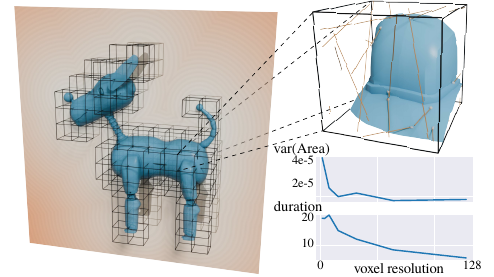}
  \caption{\label{fig:sparsevoxels}
           Sparse voxel structures can accelerate our method and act as a form of stratified sampling which reduces variance for estimations of shape quantities as described in Section \ref{sec:surface-area}. Here we show one instance of such a voxel structure with a voxel grid resolution of 16, along with a zoom-in view of one voxel, a set of sampled rays (orange) in this voxel, and the resultant samples on the zero level set (white). We plot the variance of a surface area estimate over 30 runs using sparse voxel structures built with different grid resolutions, which demonstrates the variance-reducing benefits of stratified sampling through voxels. We also plot the total time needed (in seconds) for our method across voxel resolutions using the same number of rays and show its acceleration benefits. 
           }
\end{figure}

\section{Evaluations}
An ideal uniformly distributed set of samples on the surface means the ratio of samples belonging to each local surface region should be the same as the area ratio of the local region. Therefore, if we can divide the surface into disjoint local patches, one can measure the uniformity of a sample set by computing statistical distance metrics, such as total variation (TV) distance and KL divergence, between the distribution of samples per patch and the area density of these patches. 

While in practice our method is not necessary for triangle meshes, they nevertheless provide an ideal real-world test set for measuring the behavior of our method and baselines against a ground truth sampler.
We can easily produce a ``ground truth'' uniform distribution over the mesh by sampling triangles proportional to area and uniformly sampling on each selected triangle, and we can also evaluate our method on triangle meshes through direct ray-intersection queries or sphere tracing the minimum distance to the mesh (in our experiments, we use the latter for the sake of generality).
Therefore, we focus on evaluating our method and baseline methods on signed implicit functions defined by an existing mesh dataset~\cite{fieldaligneddataset}. 

The surface defined by a triangular mesh has a natural disjoint decomposition into triangles, so for a given sample set on a mesh, we can measure its uniformity via the discrete total variation (TV) score by taking the sum over all triangles of the absolute difference between the proportion of samples in the triangle and the area-proportional density of the triangle:
\begin{equation}
\text{TV} = \frac{1}{2} \sum_i \left| \frac{n_i}{N} - \frac{A_i}{A_\text{total}} \right| 
\end{equation}
Lower TV scores indicate more uniform sample sets.

In addition to uniformity of samples, we measure the efficiency of each sampling method by counting the number of implicit function evaluation calls used during sampling. The reason we use the number of evaluations instead of run time is because many of the methods studied here can be dramatically optimized for run time due to their parallelizable nature, such as marching cubes and even our method. We believe the number of function evaluations is an accurate measurement of the method's efficiency that is agnostic to the chosen implementation, so our research prototype still provides meaningful performance statistics. 

\subsection{Warm Up: Analytic Torus}
\begin{wrapfigure}[6]{R}{0.4\linewidth}
\includegraphics[width=0.2\textwidth,trim=0.5cm 0.0cm 0.0cm 0.5cm]{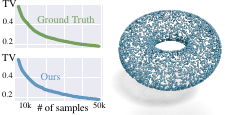}
\label{fig:torus}
\end{wrapfigure}
To start, we apply our method to an analytic implicit distance function of a torus, and compute the TV score of the generated samples. We can sample the torus uniformly via an analytical formula derived with the inverse CDF method as the ground truth.

To evaluate uniformity of our samples against the analytical samples, we parameterized the torus into a toroidal grid of resolution $100 \times 100$. Each grid patch's surface area can be analytically computed via integration. We then compute the TV distance between the surface area distribution of the toroidal patches and the distribution of number of samples per patch. See the figure on the right for a plot of TV distance against the number of samples, from both our method and the analytically-derived formula. The uniformity of samples acquired from our method on the torus's surface aligns well with the ground truth uniform sampler, verifying the theoretical uniformity of our method. 

\label{sec:evaluation}
\begin{figure}
  \includegraphics[width=\linewidth]{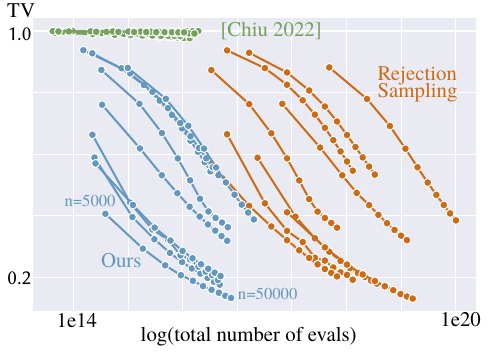}
  \caption{\label{fig:coreplot-two}
  We compare our method with the specialized Hamiltonian Monte Carlo sampling algorithm proposed in \cite{ericauniform} and rejection sampling, by drawing an increasing number of samples from 5,000 to 50,000 on a subset of 8 different implicit functions defined by meshes from a dataset \cite{fieldaligneddataset}. For each sampling run, we measure the uniformity of the resulted sample set via TV distance and the total number of function evaluations needed. Our method consistently outperforms both baseline methods in both TV distance and function evaluation count, resulting in more uniform sets of samples on surface while being less costly to evaluate. 
           }
\end{figure}

\begin{figure}
  \includegraphics[width=\linewidth]{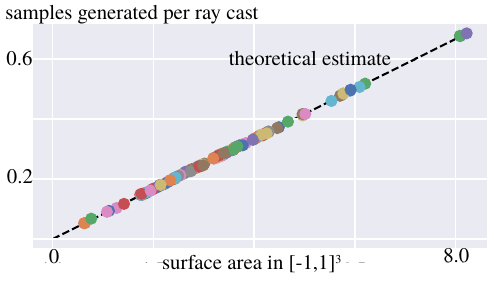}
  \caption{\label{fig:coreplot-ours}
  We ran our method on 114 different implicit surface functions from a dataset of meshes \cite{fieldaligneddataset}. Here we plot the relationship between ground truth surface area and the ratio between number of samples and the number of casted rays (i.e., average intersection count). We confirm that the the average number of samples generated per ray is linear in surface area, which corroborates the surface area formula in Section \ref{sec:surface-area}. This also means the rejection rate and wasted evaluation cost from our method correlates with the surface area, which we show in Appendix~\ref{app:eval_vs_area}. }
\end{figure}

\paragraph*{Baselines}

We next compare our methods to three baseline methods on a dataset of implicit functions defined by 114 meshes \cite{fieldaligneddataset} for which we have access to the ground truth implicit function and a ground truth uniform sampler, i.e., uniformly sampling on the mesh. For baselines, we compare with (1) uniform sampling on the extracted mesh via marching cubes, (2) sampling using a Hamiltonian Monte Carlo algorithm~\cite{ericauniform}, and (3) rejection sampling as used in many existing neural implicit techniques \cite{gpnf, diffcd}. 
\begin{figure*}[htb]
  \includegraphics[width=\linewidth]{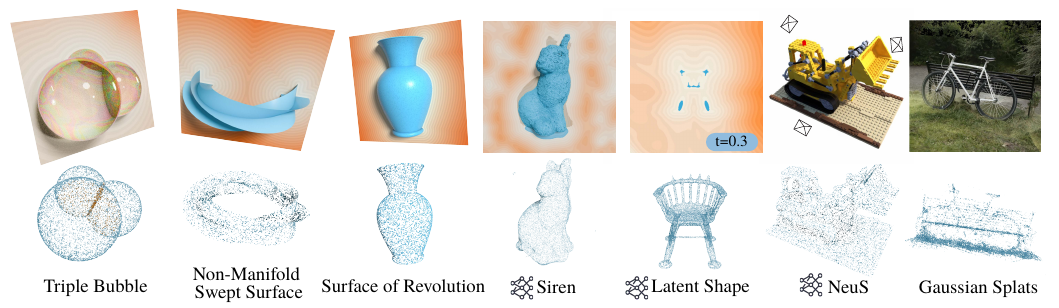}
  \caption{\label{fig:gallery}
  We apply our method to a variety of implicit functions. From left to right: a Möbius transformation of the “standard triple bubble” from \cite{triplebubble}, a swept unsigned distance function resulting in a non-manifold shape, a surface of revolution generated by a cubic B\'ezier curve, another NeuS surface trained from the Lego scene in the Blender dataset \cite{neus}, another SIREN surface fitted on LiDAR point clouds of scan55 from the DTU dataset \cite{siren}, an implicit surface from interpolating two surfaces using latent code $t=0.3$ \cite{liu2022learning}, and a preliminary result on a Gaussian Splats scene.
  }
\end{figure*}

More specifically, baseline (2) is a Markov Chain Monte Carlo method that generates a sequence of samples following a specialized density distribution proposed in the thesis of Chiu~\cite{ericauniform}, which asymptotically converges to a uniform set of samples on the surface. Implementation-wise, we mostly follow the experiment setup specified in \cite{ericauniform} except we replace the initial 1000 gradient descent steps with 5 Newton descent steps before the burn-in period. Given an initial sample $x$ randomly sampled in the bounding box and the signed distance function $f$, a Newton step is:
$$ x = x - f(x)\frac{\nabla f(x)}{||\nabla f(x)||^2}.$$ This effectively moves point $x$ to be near the surface with fewer steps than gradient descent. The rest of the hyperparameter choices stay the same: mass size is 1, total integration time is 1, the
number of integration steps is 100, and the number of burn-in Hamiltonian Monte Carlo steps is 500. Note that this specialized Hamiltonian Monte Carlo method practically can only sample near the surface with a small but non-zero $T$ parameter and therefore requires an additional projection step. Note also that both baselines of rejection sampling and sampling via marching cubes can produce samples far from the surface as well, depending on the chosen hyperparameters. We use the same Newton method for any projection step needed across all methods.

\begin{figure*}[htb]
  \includegraphics[width=\linewidth]{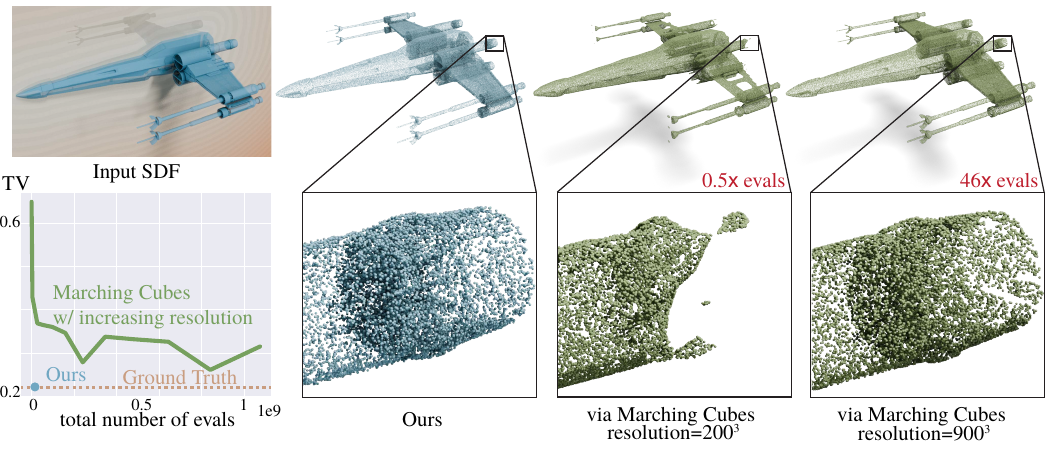}
  \caption{\label{fig:marching-cubes-comparison}
  We compare our method with sampling a mesh extracted via marching cubes, from grids with resolutions ranging from $200^3$ to $1024^3$. With this particular shape, even though the uniformity of the sample set via marching cubes improves with increasing grid resolution, the extracted surface still misses some parts of the surface with a relatively high resolution grid as shown in the zoom-in view on the right, which degrades uniformity. For sampling the same number of points, our method results in a sample set that is more uniformly distributed on the surface, measured by a lower total variation score (TV), while being much cheaper to evaluate (for reference, ours took $1.6 \times 10^7$ function evaluations). We also plot the ``Ground truth'' total variation score, averaged over 10 sampling runs using the ground truth mesh, for reference.}
\end{figure*}

\paragraph*{Results} Our method achieves the best trade-off along the axes of sample uniformity and method efficiency compared to all baselines as shown in Table \ref{tab:core-table}. 

\begin{table}[!t]
    \centering
    \caption{
        \label{tab:core-table}
        We compute the average total number of function evaluations to sample 50,000 points on a dataset of surfaces~\cite{fieldaligneddataset}, as well as the total variation (TV) score measuring the uniformity of the samples.
        Although our method is generally aimed at implicit surfaces, we use meshes here for the sake of known geometry to evaluate against.
        For marching cubes, the grid size is $1024^3$. 
        The total variation ``Ground truth'' refers to uniform sampling using the ground truth mesh, for which we use the average score across 10 sampling runs for each shape.
    }
    \resizebox{0.99\linewidth}{!}{
    \begin{tabular}{lccccc}
    \toprule
    Method & Ground Truth & \cite{ericauniform} & Rejection Sampling & via Marching Cubes & Ours \\
    \midrule
    Number of Evals & N/A & $9.98\times 10^6$ & $3.98\times 10^8$ & $1.07\times 10^9$ & $1.92\times10^7$  \\
    TV & 0.373 & 0.989 & 0.373 & 0.383 & 0.372  \\
    \bottomrule
    \end{tabular}
    }
\end{table}

In Figure \ref{fig:coreplot-two}, we compare with Hamiltonian Monte Carlo~\cite{ericauniform} and rejection sampling, which is a common method used in prior work for uniformly sampling implicit surfaces~\cite{gpnf,diffcd}. Here we sample an increasing number of samples from 20 to 50,000 on all shapes using all methods, and plot the TV distance scores and the total number of function evaluations on a randomly-chosen subset of 8 shapes. Our method is at the optimal front with respect to both metrics compared to the other baselines, i.e. consistently resulting in more uniform samples while being less expensive. As mentioned before, the theory behind the Hamiltonian Monte Carlo algorithm only guarantees uniform on-surface samples asymptotically when the temperature parameter $T$ goes to 0~\cite{ericauniform}, which is practically impossible. With a non-zero $T$, the samples usually end up floating near the surface and require an additional projection step to exactly lie on the surface. We show the samples before and after projection in Figure~\ref{fig:ficus-comparison}. In addition, the sequential sample sets acquired from such a Markov Chain Monte Carlo method often contain duplicated samples, resulting in a much worse uniformity measure.

We compare to sampling via marching cubes in a separate figure as the cost of function evaluation with marching cubes is tied with the grid resolution instead of the number of samples. As shown in Figure \ref{fig:marching-cubes-comparison}, we obtain 50,000 samples on one shape using our method and sampling on marching cubes-extracted meshes from grids of increasing resolution from $300^3$ to $1024^3$, and plot the total variation score using the ground truth mesh. Marching cubes is consistently more expensive than ours with respect to total number of queries as it grows cubically with grid resolution, and the samples are less uniform as the mesh extracted with marching cubes tend to miss thinner structures and details. While we focus on a single example in this figure, the observations generalize as demonstrated by metrics in Table~\ref{tab:core-table} computed from a dataset of shapes~\cite{fieldaligneddataset}. 

In addition to comparing with baselines, we provide additional analysis of our method itself. We focus on the relationship between the ground truth surface area and average number of samples per casted ray. As shown in Figure \ref{fig:coreplot-ours}, we observe a strong linear correlation between the average number of samples per casted ray and the surface area of the underlying implicit surface, which is what we expect from integral geometry~\cite{santalointegral}. This means the rejection rate and the amount of wasted computation from our method is closely tied with the underlying surface area as well, which we show in Appendix~\ref{app:eval_vs_area}. At the same time, this relationship allows one to use our method to estimate important shape quantities such as surface area and volume (Section~\ref{sec:surface-area}).

\section{Applications}

This section focuses on various applications of our method. Our method can be used for a range of tasks including shape quantity estimation, offset surface sampling, and curvature-based resampling. Meanwhile, extracting a uniformly distributed set of samples on the implicit surface enables a range of downstream graphics applications including blue noise sampling, neural implicit shape deformation, and many others. 

\subsection{Sampling Across Representations}
\begin{figure}
  \includegraphics[width=\linewidth]{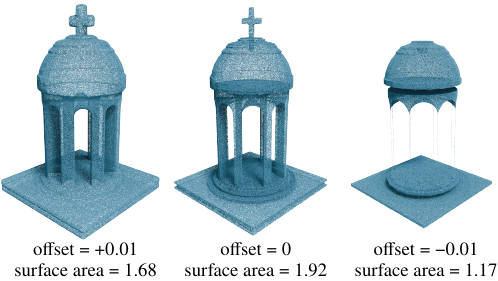}
  \caption{\label{fig:offset}
  White noise sampling can be straightforward with mesh surfaces, but offset surface sampling is often non-trivial unless converted to an implicit surface. Our method can easily sample offset surfaces, along both positive and negative directions, in addition to the surface defined at the zero level set. 
  }
\end{figure}
Our method is applicable to a variety of surface representations, including not just implicit surfaces but generally anything against which we can cast rays and gather intersections.
Figure~\ref{fig:gallery} demonstrates point samples across a collection of these representations. 
Analytical implicit surfaces can be defined by sweeping and revolution (Figure \ref{fig:gallery}), or authored by an artist (Figure \ref{fig:burger_moments}). 
Neural implicit representations (Figure \ref{fig:gallery}) such as Siren~\cite{siren}, NeuS~\cite{neus}, or latent interpolations~\cite{liu2022learning} are used in reconstruction and generative modeling.
We also include a preliminary example on a Gaussian particle surface~\cite{kerbl20233d} (Figure \ref{fig:gallery}); although there is no precise notion of a surface, our algorithm produces reasonable point sets by casting rays as in~\cite{moenne20243d}, marking an intersection when the transmittance drops below $50\%$ and casting new rays from the intersection points to obtain all intersections along the same direction.

\paragraph*{Open Boundaries and Nonmanifold Geometry}
Open surfaces, variable codimension, and non-manifold structure likewise pose no problem for our approach, so long as we can intersect rays with the surface.
Figure~\ref{fig:opensurface} shows an example sampling from an open surface by tracing an unsigned distance function, while the swept surface in Figure~\ref{fig:gallery} is non-manifold.

\paragraph*{Offset Surfaces}
Even when working with an explicit mesh representation, our approach may useful to sample from implicit, derived surfaces such as the \emph{offset surface} from a mesh, defined as the shifted surface which is a specific distance away from the input mesh.
We can easily cast rays against this offset surface by querying and shifting the distance from the mesh at any point in space; Figure \ref{fig:offset} shows one example of sampling points from such an offset surface. 
In turn this also allows the estimation of derived quantities like surface area of the offset surface, as described below.

\begin{figure}
  \includegraphics[width=\linewidth]{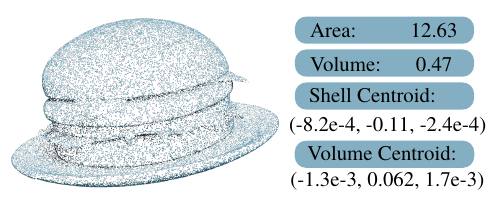}
  \caption{\label{fig:burger_moments}
  Our method can estimate various shape quantities like surface area, volume, surface  centroids, and volumetric centroids, as demonstrated here for this PseudoSDF ShaderToy example of a burger as seen in Figure \ref{fig:teaser}. ShaderToy PseudoSDF credit to \copyright Xor \href{https://creativecommons.org/licenses/by-nc-sa/3.0/}{(CC BY-NC-SA 3.0)}}
\end{figure}

\subsection{Moment Estimation}
\label{sec:moment-estimation}

Our ray-casting-based sampling can also be used to compute moments of the shape. For the calculations below we assume that the shape is enclosed in the $[-1,1]^3$ cube.
All of these formulas, or variants of them, are well-known, so we will only briefly discuss each one. See Figure \ref{fig:burger_moments} for an example of moments estimation on a pseudo-SDF ShaderToy example of a burger.

\label{sec:surface-area}
\paragraph*{Surface Area}
\begingroup
\newcommand{\numlines}[0]{M}
\newcommand{\numhits}[0]{K}
\newcommand{\chordset}[0]{\mathcal{C}}
If $\numlines$ random rays in the $[-1,1]^3$ cube against an enclosed surface $S$ results in $\numhits$ intersections, then a Cauchy-Crofton-like formula for the surface area of $S$ is:
\begin{equation}
A = \int\limits_S 1 \, d\mathbf{x} =  \lim\limits_{\numlines \rightarrow \infty} \frac{12 \numhits}{\numlines}.
\label{eq:surfacearea}
\end{equation}
This formula can be derived by taking ratios of Monte Carlo estimates from the Cauchy-Crofton formula between $S$ and $[-1,1]^3$, to eliminate the proportional constant depending on the ray integration volume~\cite{Daugmaudis2010}.

\paragraph*{Thin Shell}
For a surface $S$, averaging uniform samples directly on $S$ gives a Monte Carlo estimate of its centroid:
\begin{equation}
    \mathbf{c}_\text{shell} = \frac{1}{A} \int\limits_S \mathbf{x} \, d\mathbf{x}  = \lim\limits_{\numlines \rightarrow \infty} \ \frac{1}{\numhits} \sum\limits_{i=1}^\numhits \mathbf{x}_i.
\end{equation}

\paragraph*{Volume}

Treating the shape as enclosing a solid volumetric region $\Omega$, our intersections also give a Cauchy-Crofton-like formula for the total volume.
Similar formulas can be derived using the divergence theorem for rays in a single direction~\cite{khosravifard2010new,trusty2021shape}.
\begin{equation}
V = \int\limits_\Omega 1 \, d\mathbf{x} = \lim\limits_{\numlines \rightarrow \infty} \frac{6}{\numlines} \sum\limits_{i=1}^\numlines \sigma_i
\label{eq:volume}
\end{equation}
where $\sigma_i$ is the chord length for the $i$-th ray (i.e., the total length of the portion(s) of the ray inside the shape).
The chord length can be efficiently tracked during consecutive intersection tracing.
We can derive Eq.~\ref{eq:volume} using the fact that the average chord length of rays passing through $\Omega$ is $4V/A$~\cite{santalointegral,Mazzolo2008}, and taking the ratio of volumes between $\Omega$ and $[-1,1]^3$ to eliminate the proportional constant.

Similarly, we can approximate the volumetric centroid:
\begin{equation}
\mathbf{c}_\text{solid} = \frac{1}{V} \int\limits_{\Omega} \mathbf{x} \ d\mathbf{x} = 
\frac{
\sum\limits_{i=1}^{\numlines} \sum\limits_{j\in \chordset_i} \frac{\sigma_{ij}}{2}\left( \mathbf{a}_{ij} + \mathbf{b}_{ij}\right)
}
{
\sum\limits_{i=1}^{\numlines} \sum\limits_{j\in \chordset_i} \sigma_{ij}
}
\end{equation}
where $\chordset_i$ defines the (possibly empty) set of chords for the $i$-th ray so that $\sigma_{ij}$ is the length of the $j$-th chord of the $i$-th ray and $\mathbf{a}_{ij}$ and $\mathbf{b}_{ij}$ are that chord's corresponding start and end points.
This formula is a straightforward corollary of Eq.~\ref{eq:volume}.
\endgroup

\subsection{Curvature-based Resampling}

\begin{wrapfigure}[10]{r}{0.5\linewidth}
\includegraphics[width=0.25\textwidth,trim=0.6cm 0cm 0cm 0.3cm]{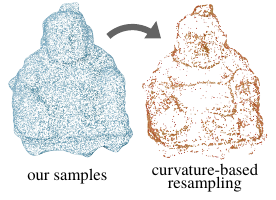}
\label{fig:curvature}
\end{wrapfigure}
With a uniformly distributed set of samples that covers every part of the surface, one can then estimate per-point metrics such as curvature and loss values for importance sampling or hard sample mining. 

Such techniques have shown benefits for tasks like neural implicit surface optimization from images \cite{isopoints}. On the right we show points obtained via curvature-based resampling on the intermediate surface during optimization \cite{siren}.  

\subsection{Low-Discrepancy Sequence Sampling}
\label{sec:bluenoise}
One direct downstream application that requires a uniform set of samples is blue noise sampling via subsampling \cite{bluenoise}. These samples can then be used for texture synthesis, simulation or remeshing. See Figure \ref{fig:bluenoise} for an example, where we subsample our uniform a samples as blue noise.

One can also replace the uniform ray sampling specified in Section \ref{sec:ray-sampling} with low-discrepancy sequences of rays as proposed in \cite{liu2006quasi, liu2010surface}.
This does not necessarily yield a corresponding low-discrepancy sequence on the surface, however we find that it indeed improves convergence when estimating moments, as shown in Figure \ref{fig:uniform_vs_lds}.
\begin{figure}
    \includegraphics[width=\linewidth, trim=0cm 0cm 0cm 0cm]{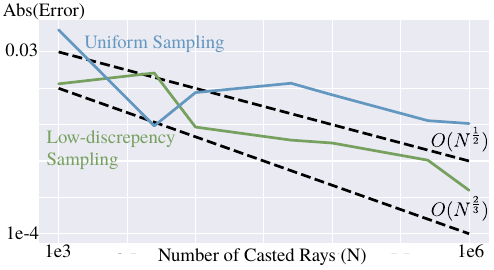}
  \caption{
  \label{fig:uniform_vs_lds}
  We plot the absolute surface area estimation error with samples acquired from sampled rays using a low-discrepancy sequence as proposed in \cite{liu2006quasi, liu2010surface}, as well as uniformly sampled rays, as described in Section \ref{sec:ray-sampling}. We also plot asymptotes $O(N^{\frac{2}{3}})$ and $O(N^{\frac{1}{2}})$, and observe that using the low-discrepancy sequence results in faster convergence, as pointed out in \cite{liu2006quasi}.} 
\end{figure}

\subsection{Neural Implicit Deformation}
\label{sec:deformation}

\begin{wrapfigure}[9]{r}{0.4\linewidth}
\includegraphics[width=\linewidth,trim=0.6cm 0cm 0cm 0.5cm]{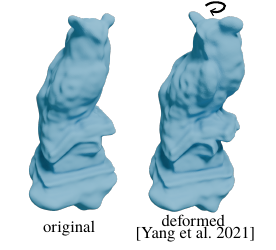}
\label{fig:deformation}
\end{wrapfigure}

A set of uniformly distributed samples on a surface is also essential for defining surface loss functions for neural implicit deformation as described in \cite{gpnf}. The original paper considers a Langevin dynamics-based sampling method which suffers from clumping near high-curvature regions and empirically uses rejection sampling with projection in their method, which we compared to as one of the baselines in Section \ref{sec:evaluation}. Here we replace the sampling module in the original implementation with our method and show one result on the right. 
\subsection{Two Dimensional Shapes}

\setlength{\columnsep}{1em}
\setlength{\intextsep}{0em}
\begin{wrapfigure}[9]{r}{0.4\linewidth}
\includegraphics[width=\linewidth]{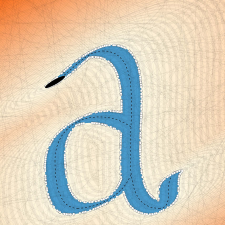}
\label{fig:metafont}
\end{wrapfigure} 
While our primary motivation is sampling surfaces in 3D, our algorithm trivially generalizes to 2D.
The recent surge of interest in generative vector graphics \cite{yang2025omnisvgunifiedscalablevector} and diffusion models for vector fonts \cite{ThamizharasanLA24} motivates the need for sampling structured shape outlines in the plane. There is already success utilizing implicit functions for these tasks \cite{ReddyZWFJM21}. The inset figure shows a classic Metafont `a' --- defined as the Minkowski sum of a rotated ellipse along a spline \cite{KnuthMFbook} --- treated as an implicit function, whose boundary is uniformly sampled with our approach restricted to 2D, zoom to see samples.
Generalization to 4D and beyond is also exciting to consider, though the average chord length formulas~\cite{santalointegral,Mazzolo2008} indicate decreasing efficiency with dimension.

\section{Conclusions}

This work considers the problem of uniformly randomly sampling surfaces without an explicit mesh representation, such as implicit surfaces. We presented a solution via randomly sampled rays cast in space and finding ray intersections with the implicit level set. While our method is grounded in existing theory and prior explorations, we evaluate it on a wide variety of implicit surfaces including modern neural representations. By leveraging the power of modified sphere tracing, we show its advantages over common baselines such as sampling on an extracted mesh, rejection sampling, and a Markov Chain Monte Carlo-based method. We explore many downstream applications that extend from our method or benefit from a uniformly-distributed set of samples on a surface. 

\paragraph*{Limitations and Future Work}

For shapes with extremely thin sparse features which occupy only a small fraction of their bounding box, sampling may become inefficient as most rays do not intersect the geometry.
All sampling methods struggle with this case in some form or another: grid-based approaches such as marching cubes may be entirely alias and miss these features, while MCMC-type methods (e.g., \cite{ericauniform}) may see proposal rejections increase leading to undersampling and over-duplicating.
We believe the behavior of our method is often preferential in this case, in that its sampling is still accurate but merely less efficient.
There are many avenues to improve the efficiency of our method as well, by using tighter bounding primitives and/or by using bounding primitives that support faster ray sampling, such as spheres.

Many of our examples leverage sphere tracing of SDF-like implicit functions; this fast tracing algorithm is only possible when the function has a known (or estimable) Lipschitz constant,  otherwise this method must fall back on dense marching or specialized searches (e.g., \cite{gillespie2024ray}).
Other common implicits such as truncated signed distance fields or neural fields \cite{xiesurvey} that are constructed with accompanying space-skipping data-structures are readily incorporated into our approach.

Some implicit formulations might model hard surfaces as a $0$-$1$ binary occupancy field or a similar density formulation; such fields are difficult to efficiently trace and sample from, as there is little information in the vicinity of the interface.
The stochastic framework for Poisson surface reconstruction \cite{sellan2022} provides one promising starting point, modeling the smoothed near-surface uncertainty which is inherently present.

There are still many interesting use cases that we would like to explore as future work. For example, certain simulators require estimation of intersection volumes between two shapes \cite{tong2018computation}, for which our shape quantity estimation property can be helpful. 
One could also go one dimension higher, and instead of sampling points via random ray casting, sample curves via random plane intersections as studied in \cite{breiding2020random}.

\section*{Acknowledgments}
We thank Derek Liu and Silvia Sellán for providing codes. Our research is funded in part by NSERC
Discovery (RGPIN–2022–04680), the Ontario Early Research Award
program, the Canada Research Chairs Program, a Sloan Research
Fellowship, the DSI Catalyst Grant program and gifts by Adobe Inc.

\bibliographystyle{eg-alpha-doi} 
\bibliography{uniform_sampling}

\appendix
\section{Proof of Resampling Formulation}\label{app:resampling}
Here we show that the ray resampling method described in Section~\ref{sec:resampling} produces uniformly distributed samples over a surface $S$.
Before stating the theorem, we define $\mathbf{L}$ as a compact set of rays, which contains all rays in $\mathbb{R}^3$ that intersect $S$.
Now, we wish to prove the following:
\begin{theorem}
Given a set of uniformly $M$ distributed rays $L = \{ \ell_i \}$ sampled from $\mathbf{L}$, where each $\ell_i$ intersects $S$ $k_i$ times, resampling $N$ rays from $L$ with probability proportional to $k_i$ and uniformly selecting one intersection per resampled ray produces a set of points $X = \{ x_j \}$ that is uniformly distributed over $S$.
\end{theorem}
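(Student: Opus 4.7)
The plan is to show that for any measurable subset $A \subseteq S$, a single resampled point $x_j$ lands in $A$ with probability $\mathrm{Area}(A)/\mathrm{Area}(S)$, which characterizes uniformity on $S$. Writing $k(\ell)$ for the number of intersections of ray $\ell$ with $S$ and $I_A(\ell)$ for the number of intersections of $\ell$ with $A$, the key identity I will invoke is the Cauchy--Crofton formula: with respect to the motion-invariant measure $\mu$ on $\mathbf{L}$,
\begin{equation*}
\int_{\mathbf{L}} I_A(\ell)\, d\mu(\ell) \;=\; C \cdot \mathrm{Area}(A),
\end{equation*}
for a constant $C$ depending only on the normalization of $\mu$ and the ambient dimension. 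Specializing to $A = S$ gives $\int_{\mathbf{L}} k(\ell)\, d\mu(\ell) = C \cdot \mathrm{Area}(S)$.

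Next I would unroll the resampling procedure conditionally. Fix the ray set $L = \{\ell_i\}_{i=1}^M$ sampled uniformly from $\mathbf{L}$. The probability that a single resampled ray is $\ell_i$ equals $k(\ell_i)/\sum_{i'} k(\ell_{i'})$, and, conditional on picking $\ell_i$, the probability of selecting an intersection lying in $A$ is $I_A(\ell_i)/k(\ell_i)$. Multiplying and summing, the two $k(\ell_i)$ factors cancel to give
\begin{equation*}
\Pr(x_j \in A \mid L) \;=\; \frac{\sum_{i=1}^M I_A(\ell_i)}{\sum_{i=1}^M k(\ell_i)}.
\end{equation*}
This cancellation is the heart of the argument: weighting rays by $k(\ell_i)$ exactly compensates for the fact that a ray's contribution to area is spread across all $k(\ell_i)$ of its intersections.

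Finally, I would take the expectation over the random draw of $L$. Since each $\ell_i$ is drawn from $\mu/|\mathbf{L}|$, both $\frac{1}{M}\sum I_A(\ell_i)$ and $\frac{1}{M}\sum k(\ell_i)$ are unbiased Monte Carlo estimators of $\int_{\mathbf{L}} I_A\, d\mu / |\mathbf{L}|$ and $\int_{\mathbf{L}} k\, d\mu / |\mathbf{L}|$ respectively. By the strong law of large numbers (using finiteness of $k$ on the compact set $\mathbf{L}$) the ratio converges almost surely as $M \to \infty$ to
\begin{equation*}
\frac{\int_{\mathbf{L}} I_A(\ell)\, d\mu(\ell)}{\int_{\mathbf{L}} k(\ell)\, d\mu(\ell)} \;=\; \frac{C \cdot \mathrm{Area}(A)}{C \cdot \mathrm{Area}(S)} \;=\; \frac{\mathrm{Area}(A)}{\mathrm{Area}(S)},
\end{equation*}
which is the uniform distribution on $S$. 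Since $A$ was arbitrary, each $x_j$ is marginally uniform on $S$ in the limit.

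The main obstacle is the ratio-of-sums structure: at finite $M$, $\mathbb{E}[\text{ratio}] \neq \text{ratio of }\mathbb{E}$'s, so exact uniformity holds only in the limit $M \to \infty$ (matching the consistency-style claims elsewhere in the paper). I would therefore state the theorem as asymptotic and flag that the alternative interpretation, in which we condition on $L$ and regard the empirical measure $\sum_i I_A(\ell_i)/\sum_i k(\ell_i)$ as itself the "distribution," is exactly uniform in expectation up to the usual Cauchy--Crofton constant. A secondary technical point, which I would address in a short remark, is that compactness of $\mathbf{L}$ and integrability of $k$ (guaranteed because $S$ lies in the bounded domain and has finite area) are needed to justify the law-of-large-numbers step and to ensure the denominator does not vanish asymptotically.
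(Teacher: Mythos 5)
Your proof is correct and rests on the same two pillars as the paper's---the Cauchy--Crofton formula to convert ray-integrals into areas, and the cancellation of $k_i$ between the resampling weight and the uniform choice among intersections---but it arrives there by a genuinely different route and proves a slightly different statement. The paper casts the argument in the Resampled Importance Sampling framework of Talbot \etal, setting $q(\ell_i) = k_i$ and $p = 1/|\mathbf{L}|$ and showing that the RIS estimator for $\int_S f$ collapses, as $M \to \infty$, to the uniform-weight estimator $\frac{1}{N}\sum_j |S|\, f(x_j)$; this directly certifies 1-equidistribution of $X$ as the paper defines it. You instead condition on $L$, compute the marginal law of a single resampled point via the telescoping identity $\Pr(x_j \in A \mid L) = \sum_i I_A(\ell_i) / \sum_i k(\ell_i)$, and apply the strong law of large numbers to numerator and denominator separately before taking the ratio. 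Your route is more elementary, is more explicit about the asymptotic-in-$M$ nature of the claim, and sidesteps constant-tracking by letting the Cauchy--Crofton constant cancel in the ratio (the paper instead has to flag a dropped constant from Palais \etal\ in a footnote). The trade-off is that you establish marginal uniformity of each $x_j$ rather than 1-equidistribution of the sequence; since conditional on $L$ the $N$ resampled points are i.i.d.\ from the empirical law $\sum_i I_A(\ell_i)/\sum_i k(\ell_i)$, adding one law-of-large-numbers step over the resampled draws would close that gap and recover the paper's estimator-consistency formulation.
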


\begin{proof}
We will prove the theorem by showing that $X$ is 1-equidistributed; that is, we can build a consistent Monte Carlo estimator of integrals over $S$ using $X$ and uniform contribution weights.
We resample rays by following Resampled Importance Sampling (RIS) \cite{talbot2005importance}, using a target function of $q(\ell_i) = k_i$ and noting that the base distribution is $p = p(\ell_i) = 1/| \mathbf{L} |$.
We assign a resampling weight $w_i = \frac{q_i}{p}$ to each $\ell_i$, and resample $N$ rays proportional to $w_i$.
From these resampled rays $L_r = \{ \ell_j \}$, we uniformly select an intersection with $S$, $\mathbf{x}_j$, with probability $1/k_j$.
Through an extension of the Cauchy-Crofton formula for evaluating integrals over $S$~\cite{palaispoints}, we know that in $\mathbb{R}^3$, $\int_S f(x) dx = \frac{1}{2\pi} \int_\mathbf{L} \sum_{i=1}^{k(\ell)}f(x_i) d\ell$, i.e., an integral over surfaces can be converted to an integral over rays by summing the integrand over each of the $k(\ell)$ intersections $\ell$ makes with $S$, denoted $x_i$.
Using RIS, the Monte Carlo estimator for an arbitrary function $f$ over $S$ is then
\begin{align*}
    I &= \frac{1}{2\pi}\frac{1}{N} \sum_j \frac{1}{1/k_j} \frac{f(x_j)}{q(\ell_j)} \left( \frac{1}{M} \sum_i w_i \right)  \\
    &= \frac{1}{2\pi}\frac{1}{N} \sum_j f(x_j) | \mathbf{L} | \frac{\sum_i k_i}{M}
\end{align*}
From \cite{palaispoints}, we know that as $M \to \infty$, $\frac{\sum_i k_i}{M} \to \frac{2\pi |S|}{| \mathbf{L} |}$\footnote{The proof of this result in \cite{palaispoints} has an error where they omit a factor of $\frac{2}{|  \mathbf{L} |}$, though a corrected proof follows essentially the same argument presented in the paper, by applying the Cauchy-Crofton formula and accounting for $| \mathbf{L} |$ in Monte Carlo estimates over $\mathbf{L}$.}, so the equation simplifies to $I = \frac{1}{N} \sum_j |S| f(x_i)$ as $M \to \infty$, which is the desired result.
\end{proof}
\begin{figure}
\includegraphics[width=\linewidth, trim=0cm 0cm 0cm 0.2cm]{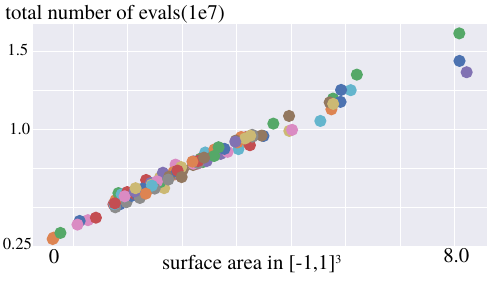}
  \vspace{-7mm}
  \caption[trim=0cm 0cm 0cm 1cm]{\label{fig:eval_vs_area}
  We ran our method to sample points with 50,000 rays on 114 shapes from a dataset of meshes~\cite{fieldaligneddataset}, and plotted the total number of evaluations required and the ground truth surface area. 
  }
\end{figure}

\section{Analysis: Evaluation Count vs. Surface Area}
\label{app:eval_vs_area}
Besides analyzing the average number of samples per casted ray, we also empirically analyzed the relationship between the total function evaluation cost of our method and the surface area of the shape being sampled. Not surprisingly, as shown in Figure~\ref{fig:eval_vs_area}, it also has an approximately linear relationship as sphere tracing-like methods are most costly near the surface.

\end{document}